\newtheorem{thm}{Theorem}
\newtheorem{cor}{Corollary}
\newenvironment{keyword}{\par{\noindent\bf Keywords:}}
\begin{document}

\title{On the approximability of minmax (regret) network optimization problems}

\author{Adam Kasperski\thanks{Corresponding author}\\
   {\small \textit{Institute of Industrial}}
  {\small \textit{Engineering and Management,}}
  {\small \textit{Wroc{\l}aw University of Technology,}}\\
  {\small \textit{Wybrze{\.z}e Wyspia{\'n}skiego 27,}}
  {\small \textit{50-370 Wroc{\l}aw, Poland,}}
  {\small \textit{adam.kasperski@pwr.wroc.pl}}
  \and
  Pawe{\l} Zieli{\'n}ski\\
    {\small \textit{Institute of Mathematics}}
  {\small \textit{and Computer Science}}
  {\small \textit{Wroc{\l}aw University of Technology,}}\\
  {\small \textit{Wybrze{\.z}e Wyspia{\'n}skiego 27,}}
  {\small \textit{50-370 Wroc{\l}aw, Poland,}}
  {\small \textit{pawel.zielinski@pwr.wroc.pl}}}

\date{}
\maketitle

\begin{abstract}
  In this paper the minmax (regret) versions of some basic polynomially solvable deterministic network problems are discussed. It is shown that if the number of scenarios is unbounded, then the problems under consideration are not approximable within $\log^{1-\epsilon} K$ for any $\epsilon>0$ unless NP $\subseteq$ DTIME$(n^{\mathrm{poly} \log n})$, where $K$  is the number of scenarios.
\end{abstract}

\begin{keyword}
  Combinatorial optimization;
  Approximation;
  Minmax;
  Minmax regret;
\end{keyword}

\section{Introduction}

We are given a network modeled by a directed or undirected graph $G=(V,E)$ with nonnegative cost $c_e$ associated with every edge $e\in E$. A set of feasible solutions $\Phi$ consists of some subsets of the edges of $G$. It may contain, for instance, all $s-t$ paths, spanning trees, $s-t$ cuts or matchings in $G$. 
In a classical \emph{deterministic network problem} $\mathcal{P}$, i.e. the problem in which the costs $c_e$ are precisely given, 
we wish to find a feasible solution $X\in\Phi$ that minimizes the total cost, namely the value of $\sum_{e\in X} c_e$. In this paper we assume that $\mathcal{P}$ is polynomially solvable. A comprehensive review of various polynomially solvable deterministic 
network problems~$\mathcal{P}$ can be found in~\cite{AH93, law76}.

In practice, the costs $c_e$  in the objective may be  uncertain.
In \textit{robust
approach}~\cite{KO97} the uncertainty is modeled by specifying a
set $\Gamma$ that contains all possible realizations of the edge costs. Every
particular realization $S=(c_e^S)_{e\in E}$ is called a \emph{scenario} and the value of $c_e^S$ denotes
the cost of edge $e$ under scenario~$S$.   There are two
ways of describing the set~$\Gamma$. 
In the \emph{interval scenario case},
the value of every edge cost may fall within a given closed
interval and~$\Gamma$ is the Cartesian product of
all the uncertainty intervals.
In the \emph{discrete
scenario case}, which is considered in this paper, 
the set of scenarios is defined by explicitly listing all scenarios.
So, $\Gamma=\{S_1,\dots,S_K\}$ is finite and contains exactly $K$ scenarios.
The cost of solution $X\in \Phi$ under scenario $S\in \Gamma$ is $F(X,S)=\sum_{e\in X} c^{S}_e$.
We will denote by 
$F^*(S)=\min_{X\in \Phi} F(X,S)$ the cost of an 
optimal solution under~$S$. 
In the \emph{minmax} version of problem~$\mathcal{P}$, we seek a solution that minimizes the worst case objective value over all scenarios, that is
\[
\textsc{Minmax}~\mathcal{P}:\; \min_{X\in \Phi}\max_{S\in\Gamma} F(X,S).
\]
In the \emph{minmax regret}  version of problem $\mathcal{P}$, we wish to find a solution that minimizes the \emph{maximal regret}, that is
\[
\textsc{Minmax Regret}~\mathcal{P}:\; 
\min_{X\in \Phi} Z(X)=\min_{X\in \Phi} \max_{S\in \Gamma}\{F(X,S)-F^*(S)\}.
\]
 The motivation of the minmax (regret) approach and a deeper discussion on the two robust criteria can be found in~\cite{KO97}. Unfortunately, under the discrete scenario case, the minmax (regret) versions of basic network problems such as \textsc{Shortest Path}, \textsc{Minimum Spanning Tree}, \textsc{Minimum Assignment} and \textsc{Minimum s-t Cut} turned out to be NP-hard even if~$\Gamma$ contains only~2 scenarios~\cite{AI05a, AI05, KO97}. Furthermore, if the number of scenarios is unbounded (it is a part of the input), then \textsc{Minmax (Regret) Shortest Path} is strongly NP-hard and not approximable within~$(2-\epsilon)$ and \textsc{Minmax (Regret) Minimum Spanning Tree} is strongly NP-hard and not approximable within~$(3/2-\epsilon)$ for any $\epsilon>0$ if P$\neq$NP~\cite{AI05c,AI07}. For the interval scenario case, if $\mathcal{P}$ is polynomially solvable, then \textsc{Minmax}~$\mathcal{P}$ is polynomially solvable as well. On the other hand, the minmax regret versions of \textsc{Shortest Path}, \textsc{Minimum Spanning Tree}, \textsc{Minimum Assignment} and \textsc{Minimum s-t Cut} are strongly NP-hard~\cite{AI05a,AI05, ARO04, AV04, PZ04}. It is worth pointing out that there are some interesting differences between the discrete and interval scenario representations. In~\cite{AV01} a minmax regret problem has been described, which is polynomially solvable in the interval case and  NP-hard for two explicitly given scenarios. On the other hand, the minmax regret linear programming problem is polynomially solvable in the discrete scenario case and becomes strongly NP-hard in the interval one~\cite{AV05}.

 Consider again the discrete scenario case. If problem~$\mathcal{P}$ is polynomially solvable and the edge costs under all scenarios are nonnegative, then both  \textsc{Minmax}~$\mathcal{P}$ and \textsc{Minmax Regret}~$\mathcal{P}$ are approximable within $K$~\cite{AI06}.
 A~generic  $K$-approximation algorithm proposed in~\cite{AI06} simply outputs an optimal solution to problem~$\mathcal{P}$ under costs $c_e=\frac{1}{K}\sum_{S\in \Gamma} c_e^S$ for all $e\in E$.
  In consequence, the problems are approximable within a constant if the number of scenarios  $K$ is assumed to be bounded
  ($K$~is bounded by a constant). 
 However,  up to now the existence of an approximation algorithm  with a constant performance ratio for the unbounded case has been an open question. 
In this paper we address this question and show that the minmax (regret) versions of \textsc{Shortest Path}, \textsc{Minimum Assignment} and \textsc{Minimum s-t Cut} are not approximable within $\log^{1-\epsilon} K$ for any $\epsilon>0$ unless NP $\subseteq$ DTIME$(n^{\mathrm{poly} \log n})$.
Here and subsequently $n$ denotes the length of the input.
The last inclusion is widely believed to be  untrue. 
We also show that \textsc{Minmax (Regret) Minimum Spanning Tree} is not approximable within $(2-\epsilon)$ for any $\epsilon>0$ unless $P=NP$.
Moreover, all the  negative results remain true even for a  
class of graphs with a very simple structure.
We can thus conclude that the discrete scenario representation of uncertainty leads to problems that are more complex to solve than the interval one. Recall that for the interval scenario case, if~$\mathcal{P}$ is polynomially solvable, then   \textsc{Minmax Regret}~$\mathcal{P}$  is approximable within~2~\cite{KA06}.

\section{The approximability of minmax (regret) network optimization problems}

In this section, we present the main results of
this paper. 
Namely,
we give a
negative answer to the question about
the existence of approximation algorithms 
with a constant performance ratio 
for the minmax 
(regret) versions of \textsc{Shortest Path}, \textsc{Minimum Assignment} and \textsc{Minimum s-t Cut},
when the number of scenarios $K$ is unbounded.
We increase the gaps obtained in~\cite{AI05c,AI07}  and prove that
the problems of interest are hard to approximate   within a ratio of
$\log^{1-\epsilon} K$ for any $\epsilon>0$.
We first consider the minmax (regret) versions of \textsc{Shortest Path}. In this problem set $\Phi$ consists of all paths between two distinguished nodes $s$ and $t$ of~$G$. 
\begin{thm}
\label{thm1}
		The \textsc{Minmax (Regret) Shortest Path} problem is not approximable within $\log^{1-\epsilon} K$ for any $\epsilon>0$, unless NP $\subseteq$ DTIME$(n^{\mathrm{poly} \log n})$ even for edge series-parallel directed or undirected graphs.
\end{thm}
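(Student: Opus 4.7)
My plan is to give a polynomial-time gap-preserving reduction from a problem whose inapproximability matches the desired bound under NP $\not\subseteq$ DTIME$(n^{\mathrm{poly}\log n})$. The natural source is Min-Label-Cover (equivalently Min-Rep), which is known to admit no polynomial-time $2^{\log^{1-\epsilon}n}$-approximation under this assumption. Since $2^{\log^{1-\epsilon}n}$ greatly exceeds $\log^{1-\epsilon}K$ when $K$ is polynomial in $n$, there is enough slack for a reduction that loses up to a logarithm, and the NP-hardness route via the two existing results in~\cite{AI05c,AI07} does not suffice here.

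Given a Min-Label-Cover instance with variables $v_1,\ldots,v_q$ taking values in a label set $L$ and pairwise constraints indexed by the edges of a bipartite graph, I would build the network $G$ as the series composition of $q$ parallel bundles, one bundle per variable, the $i$-th bundle consisting of $|L|$ parallel edges, one for each candidate label. An $s$-$t$ path then selects exactly one edge from each bundle and so encodes a label assignment $\phi:\{1,\ldots,q\}\to L$. By construction $G$ is edge series-parallel in both the directed and the undirected senses. For each constraint $C$ on a pair $(v_i,v_j)$ and each label pair $(\ell,\ell')$ violating $C$, I would introduce a scenario giving cost $M$ to the edge labelled $\ell$ in bundle $i$ and to the edge labelled $\ell'$ in bundle $j$, and cost $0$ elsewhere; the number of scenarios $K$ is then polynomial in the source instance size, so $\log K=\Theta(\log n)$.

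The gap analysis proceeds from the observation that $F(\phi,S_{C,\ell,\ell'})=2M$ exactly when $\phi$ realises the bad pair $(\ell,\ell')$ on the variables of $C$, and is $0$ otherwise. Hence $\max_{S\in\Gamma}F(\phi,S)$ is $0$ if $\phi$ satisfies every constraint and at least $2M$ otherwise. This already yields a decision reduction; to upgrade to a multiplicative gap of $\log^{1-\epsilon}K$ I would apply parallel-repetition style layering to the Min-Label-Cover instance so that in a NO instance every path incurs the bad cost in scenarios corresponding to a large (polylogarithmic) number of repetitions, and then weight these scenarios additively. The regret version is handled uniformly by the following observation: under any single scenario $S_{C,\ell,\ell'}$ one can always choose different edges from bundles $i$ and $j$, so $F^*(S)=0$ and $Z(\phi)=\max_S F(\phi,S)$, whence the minmax bound transfers verbatim to the minmax regret setting.

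The main obstacle will be this amplification step: the vanilla encoding only produces a $\{0,\text{positive}\}$ dichotomy, and pushing the $2^{\log^{1-\epsilon}n}$ hardness of Min-Label-Cover into a true multiplicative $\log^{1-\epsilon}K$ factor demands that the weighted scenarios be arranged so that every NO-instance path accumulates cost in some single scenario, all while keeping the graph edge series-parallel and the scenario count polynomial. Secondary but nontrivial points are verifying that the same reduction works in the undirected case (the series-parallel structure is symmetric under edge orientation) and checking that the regret-avoidance trick remains valid after the amplification.
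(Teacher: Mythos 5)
Your base gadget and your treatment of the regret version are essentially sound and match the paper in spirit: a series composition of parallel bundles so that an $s$--$t$ path encodes an assignment, one scenario per ``forbidden pair'' charging the two offending edges, and the observation that $F^*(S)=0$ for every scenario (each scenario charges only two edges lying in different bundles, so some path avoids both), which makes the minmax and minmax regret objectives coincide. One factual slip: in the YES case the optimal path does \emph{not} have cost $0$ --- it may still contain exactly one of the two edges charged by a scenario, so its cost in that scenario is $M$, not $0$. The true dichotomy is $\le M$ versus $\ge 2M$, i.e.\ a multiplicative gap of exactly $2$ (if it really were $0$ versus positive, the theorem would follow trivially with an unbounded ratio, contradicting the known $K$-approximation). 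The paper gets the same gap of $2$ by reducing from \textsc{3-SAT} with one gadget per clause and one scenario per pair of contradictory literals.

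The genuine gap in your proposal is the amplification step, which you correctly identify as ``the main obstacle'' but do not solve; yet this step is the entire content of the theorem beyond the known constant-factor hardness of~\cite{AI05c,AI07}. Appealing to Min-Label-Cover does not help by itself: its $2^{\log^{1-\epsilon}n}$ hardness concerns the number of labels, whereas your encoding collapses the objective to a bounded quantity, so no slack is actually transferred, and ``weighting scenarios additively'' does not make the \emph{maximum over scenarios} grow --- you need a \emph{single} scenario that simultaneously witnesses many violations on every NO-instance path. The paper achieves this by recursive self-substitution: replace every edge of $G$ by a copy of $G$ (then of $G^{(1)}$, etc.), and replace each scenario's two unit entries by all pairs from the previous scenario set (a Cartesian product $\Gamma^{(i)}\times\Gamma^{(i)}$), so that a path forced to pay $2$ at the outer level is forced, inside each of those two offending copies, to pay the previous gap again in some jointly chosen scenario; this doubles the gap per iteration while squaring the edge count and roughly squaring the scenario count. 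Iterating $t=\log\log^\beta n$ times yields gap $2\log^\beta n$ against $\log K'=O(\log^{\beta+1}n)$, i.e.\ ratio $\Omega(\log^{\beta/(\beta+1)}K')$, at quasi-polynomial size --- which is exactly why the hypothesis is NP $\not\subseteq$ DTIME$(n^{\mathrm{poly}\log n})$ rather than P $\neq$ NP. Without this (or an equivalent) explicit product construction, your argument establishes only the gap of $2$.
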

\begin{proof}
Consider  the \textsc{3-SAT} problem, in which we are given  a set $\mathscr{U}=\{x_1,\dots,x_n\}$ of Boolean variables and a collection $\mathscr{C}=\{C_1,\dots,C_m\}$ of clauses, where every clause in $\mathscr{C}$  has exactly 
three distinct literals. We ask if there is an assignment to $\mathscr{U}$ that 
satisfies all clauses in $\mathscr{C}$. This problem is known to be strongly NP-complete~\cite{GR79}.

Given an instance of \textsc{3-SAT}, we construct the corresponding instance of \textsc{Minmax Shortest Path} as follows:
we associate with each clause $C_i=(l_i^1\vee l_i^2 \vee l_i^3)$
a digraph $G_i$ composed of~5 nodes: 
$s_i,v_1^i,v_2^i,v_3^i,t_i$ and 6 arcs: the arcs
$(s_i,v_1^i)$, $(s_i,v_2^i)$, $(s_i,v_3^i)$
correspond to literals in $C_i$ (\emph{literal arcs}),
the arcs $(v_1^i,t_i)$, $(v_2^i,t_i)$, $(v_3^i,t_i)$
have costs equal to~0 under every scenario 
(\emph{dummy arcs});  in order to construct digraph $G$,
 we connect all digraphs 
 $G_1,\ldots,G_m$ by dummy arcs $(t_i,s_{i+1})$ for $i=1,\ldots m-1$;  
we finish the construction of $G$ by setting $s=s_1$ and $t=t_m$. We now form scenario set $\Gamma$ as follows.
For every pair of arcs of $G$, $(s_i,v_j^i)$ and $(s_q,v_r^q)$,
that correspond to contradictory
 literals  $l_i^j$ and $l_q^r$, i.e. $l_i^j=\sim l_q^r$, we create scenario $S$ such that under this scenario the costs of the 
 arcs $(s_i,v_j^i)$ and $(s_q,v_r^q)$ are set to 1 and the costs of all the remaining arcs are set to~0. An example of the reduction is shown in Figure~\ref{fig2}.
 Notice that the resulting graph $G$ has a simple series-parallel topology (see for instance~\cite{VA82} for a description of this class of graphs).
 
\begin{figure}[ht]
			\psfrag{v11}{\tiny $v^{1}_1$}
			\psfrag{v12}{\tiny $v^{1}_2$}
			\psfrag{v13}{\tiny $v^{1}_3$}	
						
			\psfrag{v21}{\tiny $v^{2}_1$}
			\psfrag{v22}{\tiny $v^{2}_2$}
			\psfrag{v23}{\tiny $v^{2}_3$}
			
			\psfrag{v31}{\tiny $v^{3}_1$}
			\psfrag{v32}{\tiny $v^{3}_2$}
			\psfrag{v33}{\tiny $v^{3}_3$}

			\psfrag{s}{\tiny $s$}
			\psfrag{t}{\tiny $t$}
			\psfrag{s2}{\tiny $s_2$}
			\psfrag{s3}{\tiny $s_3$}
			\psfrag{t1}{\tiny $t_1$}
			\psfrag{t2}{\tiny $t_2$}
			\psfrag{x1}{\tiny $x_1$}
			\psfrag{nx2}{\tiny $\sim x_2$}
			\psfrag{nx3}{\tiny $\sim x_3$}	

			\psfrag{nx1}{\tiny $\sim x_1$}
			\psfrag{x2}{\tiny $x_2$}
			\psfrag{x3}{\tiny $x_3$}

\psfrag{X}{			
    \begin{footnotesize}
    \renewcommand{\arraystretch}{0.9}
    \setlength\tabcolsep{3pt}			
    \begin{tabular}{c|cccccc|}
                            &$S_1$&$S_2$&$S_3$&$S_4$&$S_5$&$S_6$\\
      \hline
    $(s,v^{1}_{1})$&\textbf{1}&0&0&0&0&0\\ 
    $(s,v^{1}_{2})$&0&\textbf{1}&\textbf{1}&0&0&0\\ 
    $(s,v^{1}_{3})$&0&0&0&\textbf{1}&\textbf{1}&0\\
    \hline
    $(s_2,v^{2}_{1})$&\textbf{1}&0&0&0&0&\textbf{1}\\
    $(s_2,v^{2}_{2})$&0&\textbf{1}&0&0&0&0\\
    $(s_2,v^{2}_{3})$&0&0&0&\textbf{1}&0&0\\
    \hline
    $(s_3,v^{3}_{1})$&0&0&0&0&0&\textbf{1}\\
    $(s_3,v^{3}_{2})$&0&0&\textbf{1}&0&0&0\\
    $(s_3,v^{3}_{3})$&0&0&0&0&\textbf{1}&0\\
    \end{tabular}
    \end{footnotesize}
}	
\psfrag{C}{\footnotesize
$\mathscr{C}=\{(x_1\vee\sim x_2\vee\sim x_3),
(\sim x_1\vee x_2 \vee x_3),(x_1 \vee x_2 \vee x_3)\}$}
						
      \includegraphics{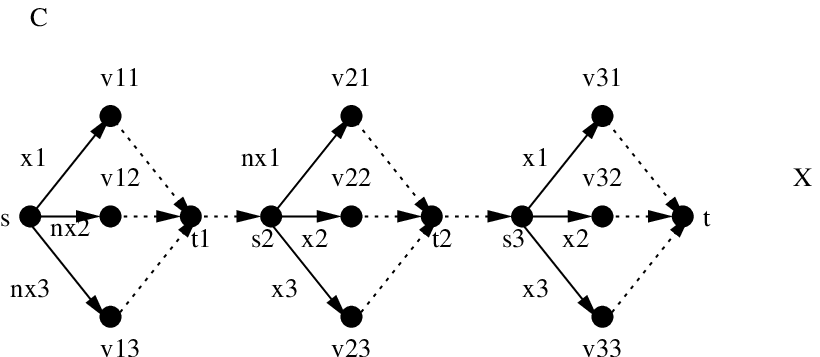}
\caption{\small An example of the reduction. 
All the dummy arcs (the dotted arcs) have costs equal to  0 under all scenarios and they are not listed in the table.} \label{fig2}
\end{figure}

It is easily verified that
if the answer to \textsc{3-SAT} is `Yes', then there is a path $P$ in $G$ that does not use arcs corresponding to contradictory literals. So, $\max_{S\in \Gamma} F(P,S)\leq 1$. On the other hand, if the answer is `No', then all paths in $G$ must use at least two arcs corresponding to contradictory literals and  $\max_{S\in \Gamma} F(P,S)\geq 2$. This yields a gap of~2 and  the \textsc{Minmax Shortest Path} problem is not approximable within~$(2-\epsilon)$ for any $\epsilon>0$ unless P=NP. 

We now show that the gap of~2 can be increased by applying an iterative construction that gradually increases the gap. A similar technique  was applied 
to the problem of minimizing the number of unsatisfied linear equations~\cite{AR97}  and
 to the problem of minimizing the number of nonzero variables
in linear systems~\cite{AM98}.

Let us transform the resulting graph $G=(V,E)$ into $G^{(1)}$ by replacing every  arc in $G$ by the whole graph $G$.  
We now associate scenario set $\Gamma^{(1)}$ with $G^{(1)}$ as follows. 
Initially,  $\Gamma$ has $K$ scenarios.
For every scenario $S\in \Gamma$ in graph $G$,
we create $K^2$ scenarios 
so that two values of~1 in $S$ are replaced
by all pairs of scenarios $S_i\in \Gamma$ and $S_j\in \Gamma$, $i,j=1,\ldots,K$.
In other words,  two values of~1 in $S$ are replaced
by two matrices $\mathbf{S}^{(1)}_1$ and $\mathbf{S}^{(1)}_2$
of the size $|E|\times K^2$, respectively,
where the columns of matrix
\[
\left(
\renewcommand{\arraycolsep}{2pt}
\begin{array}{c}
\mathbf{S}^{(1)}_1\\
\mathbf{S}^{(1)}_2
\end{array}
\right)
 = 
\left(
    \renewcommand{\arraycolsep}{2pt}			
\begin{array}{ccccccccccccc}
	S_1 &S_1&\ldots &S_1   &S_2 &S_2&\ldots &S_2&\ldots    &S_K&S_K&\ldots&S_K\\
	S_1 & S_2& \ldots & S_K&S_1 & S_2& \ldots & S_K&\ldots&S_1&S_2&\ldots&S_K
\end{array}
\right),\; S_i\in \Gamma, i=1,\ldots,K,
\]
 are  the Cartesian product $\Gamma\times \Gamma$. Furthermore,
  every value of~0 in $S$ is replaced by matrix $\mathbf{O}^{(1)}$ of the size $|E|\times K^2$ with all elements equal to zero. 
 The resulting instance is graph $G^{(1)}$ with $|E|^2$ edges and
 $K^3$ scenarios.
 A sample construction of $\Gamma^{(1)}$ is shown in Figure~\ref{fig3}.
Now, if the answer to \textsc{3-SAT} is `Yes', then there is a path $P$ in  graph $G^{(1)}$ such that $\max_{S\in \Gamma^{(1)}} F(P,S)\leq 1$ and $\max_{S\in \Gamma^{(1)}} F(P,S)\geq 4$ otherwise. We thus get a gap of~4. 

\begin{figure}[ht]
\begin{center}
\begin{small}
     \setlength\tabcolsep{3pt}
\begin{tabular}{c|cccccc|}
$G^{(1)}$&\multicolumn{6}{|c|}{$\Gamma^{(1)}$}\\ \hline
    $G_{(s,v^{1}_{1})}$&$\mathbf{S}^{(1)}_1$&$\mathbf{O}^{(1)}$&$\mathbf{O}^{(1)}$&$\mathbf{O}^{(1)}$&$\mathbf{O}^{(1)}$&$\mathbf{O}^{(1)}$\\ 
    $G_{(s,v^{1}_{2})}$&$\mathbf{O}^{(1)}$&$\mathbf{S}^{(1)}_1$&$\mathbf{S}^{(1)}_1$&$\mathbf{O}^{(1)}$&$\mathbf{O}^{(1)}$&$\mathbf{O}^{(1)}$\\ 
    $G_{(s,v^{1}_{3})}$&$\mathbf{O}^{(1)}$&$\mathbf{O}^{(1)}$&$\mathbf{O}^{(1)}$&$\mathbf{S}^{(1)}_1$&$\mathbf{S}^{(1)}_1$&$\mathbf{O}^{(1)}$\\
    \hline
    $G_{(s_2,v^{2}_{1})}$&$\mathbf{S}^{(1)}_2$&$\mathbf{O}^{(1)}$&$\mathbf{O}^{(1)}$&$\mathbf{O}^{(1)}$&$\mathbf{O}^{(1)}$&$\mathbf{S}^{(1)}_1$\\
    $G_{(s_2,v^{2}_{2})}$&$\mathbf{O}^{(1)}$&$\mathbf{S}^{(1)}_2$&$\mathbf{O}^{(1)}$&$\mathbf{O}^{(1)}$&$\mathbf{O}^{(1)}$&$\mathbf{O}^{(1)}$\\
    $G_{(s_2,v^{2}_{3})}$&$\mathbf{O}^{(1)}$&$\mathbf{O}^{(1)}$&$\mathbf{O}^{(1)}$&$\mathbf{S}^{(1)}_2$&$\mathbf{O}^{(1)}$&$\mathbf{O}^{(1)}$\\
    \hline
    $G_{(s_3,v^{3}_{1})}$&$\mathbf{O}^{(1)}$&$\mathbf{O}^{(1)}$&$\mathbf{O}^{(1)}$&$\mathbf{O}^{(1)}$&$\mathbf{O}^{(1)}$&$\mathbf{S}^{(1)}_2$\\
    $G_{(s_3,v^{3}_{2})}$&$\mathbf{O}^{(1)}$&$\mathbf{O}^{(1)}$&$\mathbf{S}^{(1)}_2$&$\mathbf{O}^{(1)}$&$\mathbf{O}^{(1)}$&$\mathbf{O}^{(1)}$\\
    $G_{(s_3,v^{3}_{3})}$&$\mathbf{O}^{(1)}$&$\mathbf{O}^{(1)}$&$\mathbf{O}^{(1)}$&$\mathbf{O}^{(1)}$&$\mathbf{S}^{(1)}_2$&$\mathbf{O}^{(1)}$\\
    \end{tabular}
\;\;\;
\begin{tabular}{c|cccccc|}
$G^{(2)}$&\multicolumn{6}{|c|}{$\Gamma^{(2)}$}\\ \hline
    $G^{(1)}_{(s,v^{1}_{1})}$&$\mathbf{S}^{(2)}_1$&$\mathbf{O}^{(2)}$&$\mathbf{O}^{(2)}$&$\mathbf{O}^{(2)}$&$\mathbf{O}^{(2)}$&$\mathbf{O}^{(2)}$\\ 
    $G^{(1)}_{(s,v^{1}_{2})}$&$\mathbf{O}^{(2)}$&$\mathbf{S}^{(2)}_1$&$\mathbf{S}^{(2)}_1$&$\mathbf{O}^{(2)}$&$\mathbf{O}^{(2)}$&$\mathbf{O}^{(2)}$\\ 
    $G^{(1)}_{(s,v^{1}_{3})}$&$\mathbf{O}^{(2)}$&$\mathbf{O}^{(2)}$&$\mathbf{O}^{(2)}$&$\mathbf{S}^{(2)}_1$&$\mathbf{S}^{(2)}_1$&$\mathbf{O}^{(2)}$\\
    \hline
    $G^{(1)}_{(s_2,v^{2}_{1})}$&$\mathbf{S}^{(2)}_2$&$\mathbf{O}^{(2)}$&$\mathbf{O}^{(2)}$&$\mathbf{O}^{(2)}$&$\mathbf{O}^{(2)}$&$\mathbf{S}^{(2)}_1$\\
    $G^{(1)}_{(s_2,v^{2}_{2})}$&$\mathbf{O}^{(2)}$&$\mathbf{S}^{(2)}_2$&$\mathbf{O}^{(2)}$&$\mathbf{O}^{(2)}$&$\mathbf{O}^{(2)}$&$\mathbf{O}^{(2)}$\\
    $G_{(s_2,v^{2}_{3})}$&$\mathbf{O}^{(2)}$&$\mathbf{O}^{(2)}$&$\mathbf{O}^{(2)}$&$\mathbf{S}^{(2)}_2$&$\mathbf{O}^{(2)}$&$\mathbf{O}^{(2)}$\\
    \hline
    $G^{(1)}_{(s_3,v^{3}_{1})}$&$\mathbf{O}^{(2)}$&$\mathbf{O}^{(2)}$&$\mathbf{O}^{(2)}$&$\mathbf{O}^{(2)}$&$\mathbf{O}^{(2)}$&$\mathbf{S}^{(2)}_2$\\
    $G^{(1)}_{(s_3,v^{3}_{2})}$&$\mathbf{O}^{(2)}$&$\mathbf{O}^{(2)}$&$\mathbf{S}^{(2)}_2$&$\mathbf{O}^{(2)}$&$\mathbf{O}^{(2)}$&$\mathbf{O}^{(2)}$\\
    $G^{(1)}_{(s_3,v^{3}_{3})}$&$\mathbf{O}^{(2)}$&$\mathbf{O}^{(2)}$&$\mathbf{O}^{(2)}$&$\mathbf{O}^{(2)}$&$\mathbf{S}^{(2)}_2$&$\mathbf{O}^{(2)}$\\
    \end{tabular}    
   \end{small}
   \end{center}
  \caption{The construction of $\Gamma^{(1)}$ and $\Gamma^{(2)}$ for the sample problem shown in Figure~\ref{fig2}. 
$G_{(s_i,v^{i}_{j})}$~and $G^{(1)}_{(s_i,v^{i}_{j})}$ are the graphs
$G$ and $G^{(1)}$, respectively,   
inserted in place of~$(s_i,v^{i}_{j})$ in $G$. The graphs corresponding to the dummy arcs are not shown.} \label{fig3}
  \end{figure}

We can now  repeat the construction
to obtain an instance with a gap of~8. Namely,
we construct $G^{(2)}$   by replacing every  arc
in graph $G$ by the whole graph $G^{(1)}$.
Then we form scenario set $\Gamma^{(2)}$ in the following way.
For every scenario $S\in \Gamma$ in graph $G$,
we create $(K^3)^2$ scenarios 
such that two values of~1 in $S$ are replaced
by two matrices $\mathbf{S}^{(2)}_1$ and $\mathbf{S}^{(2)}_2$
of the size $|E|^2\times (K^3)^2$, respectively,
where all columns of matrix 
\[
\left(
\renewcommand{\arraycolsep}{0.7pt}
\begin{array}{c}
\mathbf{S}^{(2)}_1\\
\mathbf{S}^{(2)}_2
\end{array}
\right)
 = 
\left(
    \renewcommand{\arraycolsep}{0.9pt}			
\begin{array}{ccccccccccccc}
	S^{(1)}_1 &S^{(1)}_1&\ldots &S^{(1)}_1   &S^{(1)}_2 &S^{(1)}_2&\ldots &S^{(1)}_2&\ldots    &S^{(1)}_{K^3}&S^{(1)}_{K^3}&\ldots&S^{(1)}_{K^3}\\
	S^{(1)}_1 & S^{(1)}_2& \ldots & S^{(1)}_{K^3}&S^{(1)}_1 & S^{(1)}_2& \ldots & S^{(1)}_{K^3}&\ldots&S^{(1)}_1&S^{(1)}_2&\ldots&S^{(1)}_{K^3}
\end{array}
\right),\mbox{ } S^{(1)}_i\in \Gamma^{(1)}, i=1,\ldots,{K^3},
\]
 are  the Cartesian product $\Gamma^{(1)}\times \Gamma^{(1)}$,
 and every value of~0 in $S$ is replaced by matrix $\mathbf{O}^{(2)}$ of the size $|E|^2\times (K^3)^2$ with all elements equal to zero
 (see Figure~\ref{fig3}).
By repeating the above construction $t$ times, we get   graph $G^{(t)}$ with $|E|^{t+1}$ edges together with scenario set $\Gamma^{(t)}$ containing $K^{2^{t+1}-1}$ scenarios that yield a total gap of 
$2^{t+1}$. 
Let $t=\log \log^\beta n$ for some fixed $\beta>0$, where $n$ is the number of variables in the instance of  \textsc{3-SAT}.
 Now graph $G^{(t)}$ has $|E|^{\log \log^\beta n +1}$ edges 
 and $K^{2\log^{\beta} n -1}$ scenarios. 
 Let $K'=K^{2\log^{\beta} n -1}$.
 Since $|E|$ and $K$ are bounded by a polynomial in $n$, graph $G^{(t)}$ 
 together with scenario set $\Gamma^{(t)}$ 
 can be constructed in $O(n^{\mathrm{poly} \log n})$ time. 
 The resulting instance of \textsc{Minmax Shortest Path} has a gap of $2\log^{\beta} n$. 
 Since $K'=K^{2\log^{\beta} n -1}$ and $K=O(n^c)$ for some constant $c$, we get $K'=2^{\log K (2\log^{\beta} n-1)}=2^{O(\log^{\beta+1} n)}$. So, $\log K'=O(\log^{\beta+1} n)$ and the gap is $2\log^{\beta} n=O(\log^{\frac{\beta}{\beta+1}} K')$. 

Assume, on the contrary, that a polynomial time algorithm approximates the \textsc{Minmax Shortest Path} problem within a factor $\log^{1-\epsilon} K$ for any $\epsilon>1-\frac{\beta}{\beta+1}$. Applying this algorithm to the resulting graph $G^{(t)}$ with scenario set $\Gamma^{(t)}$ containing $K'$ scenarios, we could decide the \textsc{3-SAT} problem in $O(n^{\mathrm{poly} \log n})$ time. But this would imply NP $\subseteq$ DTIME$(n^{\mathrm{poly} \log n})$,  a contradiction.

In order to prove the result for 
\textsc{Minmax Regret Shortest Path}, we use  exactly the same 
graph $G^{(t)}$ with scenario set $\Gamma^{(t)}$.
A proof goes without any modifications.
It follows from the fact that under every scenario $S\in \Gamma^{(t)}$
there is a path $P^{*}$ in $G^{(t)}$ such that $F(P^{*},S)=0$,
and thus $F^{*}(S)=0$.
In consequence the minmax and minmax regret criteria lead to  optimal solutions with the same total costs in the resulting 
instances. Furthermore,
if arc directions are ignored, then the results hold for 
the undirected graphs as well.
\end{proof}
It is worth pointing out that Theorem~\ref{thm1}
holds for the graphs having a very simple series-parallel
structure. The class of series-parallel graphs
is a subclass of various classes of graphs and its description can be found for instance in~\cite{VA82}. Recall also that in the interval scenario case  the \textsc{Minmax Regret Shortest Path} problem for edge series-parallel graphs admits a
fully polynomial time approximation scheme~\cite{KA07}.

In the \textsc{Minimum Assignment} problem we assume that $G$ is a bipartite graph and $\Phi$ consists of all perfect matchings in $G$. 
The following corollary is 
a  consequence of Theorem~\ref{thm1}:
\begin{cor}
The \textsc{Minmax (Regret) Minimum Assignment} problem is not approximable within $\log^{1-\epsilon} K$ for any $\epsilon>0$, unless NP $\subseteq$ DTIME$(n^{\mathrm{poly} \log n})$.
\end{cor}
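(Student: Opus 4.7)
The plan is to reduce \textsc{Minmax (Regret) Shortest Path} on the directed graph $G^{(t)}$ constructed in the proof of Theorem~\ref{thm1} to \textsc{Minmax (Regret) Minimum Assignment} in a cost- and scenario-preserving way, so that the inapproximability bound of Theorem~\ref{thm1} transfers verbatim.

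Given $G^{(t)}=(V,E)$ with source $s$, sink $t$ and scenario set $\Gamma^{(t)}$, I would build a bipartite graph $H$ with sides $V_1=\{u_v:v\in V\setminus\{t\}\}$ and $V_2=\{w_v:v\in V\setminus\{s\}\}$. For every arc $(x,y)\in E$ I insert an edge $\{u_x,w_y\}$ whose cost under each scenario equals the cost of $(x,y)$; for every $v\in V\setminus\{s,t\}$ I insert a ``skip'' edge $\{u_v,w_v\}$ of cost $0$ under every scenario. Each scenario in $\Gamma^{(t)}$ carries over verbatim to $H$, so the number of scenarios is unchanged.

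The key step is to establish a cost-preserving bijection between $s$-$t$ paths in $G^{(t)}$ and perfect matchings in $H$. Given a perfect matching $M$, I would consider the subset $A_M\subseteq E$ of arcs whose corresponding edges belong to $M$; a short degree count shows that $s$ has out-degree $1$ in $A_M$, $t$ has in-degree $1$, and every other vertex has in-degree equal to out-degree ($0$ if its skip edge is used in $M$, $1$ otherwise). Since $G^{(t)}$ is acyclic, $A_M$ must then be exactly an $s$-$t$ path, and the converse direction follows by matching every off-path vertex via its skip edge. Because skip edges contribute $0$ under every scenario, the matching cost equals the corresponding path cost scenario by scenario, and the minmax objective is identical on either side.

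For the regret version no modification is required: as observed at the end of the proof of Theorem~\ref{thm1}, under every $S\in\Gamma^{(t)}$ there is a $0$-cost $s$-$t$ path in $G^{(t)}$, and hence a $0$-cost perfect matching in $H$, so $F^{*}(S)=0$ on both sides and regret coincides with cost. The step I expect to need the most care is the degree-balance argument identifying $A_M$ with a single $s$-$t$ path; acyclicity of $G^{(t)}$ is essential to rule out stray cycle components of $A_M$, while the remaining bookkeeping is routine.
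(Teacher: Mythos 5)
Your proposal is correct and follows essentially the same route as the paper: the paper's proof is a one-line appeal to the cost-preserving reduction from \textsc{Minmax (Regret) Shortest Path} to \textsc{Minmax (Regret) Minimum Assignment} given in the cited assignment paper of Aissi, Bazgan and Vanderpooten, and your explicit bipartite construction (arc edges plus zero-cost skip edges, with the degree-count and acyclicity argument identifying matchings with $s$-$t$ paths) is precisely that reduction spelled out. The only point worth flagging is that your bijection tacitly uses that $s$ has no incoming and $t$ no outgoing arcs in $G^{(t)}$, which does hold for the series-parallel graphs of Theorem~\ref{thm1}.
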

\begin{proof}
In~\cite{AI05} a  
cost preserving reduction from \textsc{Minmax (Regret) Shortest Path} to \textsc{Minmax (Regret) Minimum Assignment} has been proposed.
Therefore, we have exactly the same inapproximability
results for \textsc{Minmax (Regret) Minimum Assignment}
as for \textsc{Minmax (Regret) Shortest Path}.
\end{proof}

Recall that in the \textsc{Minimum Spanning Tree} problem $\Phi$ consists of all spanning trees of~$G$, that is all subsets of exactly $|V|-1$ edges that form acyclic subgraphs of $G$. The following result is true:
\begin{cor}
\label{cor1}
The \textsc{Minmax (Regret) Minimum Spanning Tree} problem is not approximable within $(2-\epsilon)$ for any $\epsilon>0$, unless P=NP even for edge series-parallel graphs.
\end{cor}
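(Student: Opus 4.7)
The plan is to re-use the base construction from the proof of Theorem~\ref{thm1}, namely the undirected series-parallel graph $G$ built from a $3$-SAT instance together with the scenario set $\Gamma$ consisting of one scenario per contradictory literal pair, but now interpret feasible solutions as spanning trees of $G$ rather than $s-t$ paths. Since only a factor of $(2-\epsilon)$ is sought, no iterative gap amplification is needed, and in fact it would not apply directly for spanning trees because replacing an edge with a recursive copy of the gadget alters the cardinality of the resulting spanning trees.

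The key step is to analyze each clause gadget $G_i$ on vertices $\{s_i,v_1^i,v_2^i,v_3^i,t_i\}$ with literal edges $\ell_j^i=(s_i,v_j^i)$ and dummy edges $d_j^i=(v_j^i,t_i)$. A direct inspection shows that every spanning tree of $G_i$ removes exactly one edge from each of two distinct ``branches'' $\{\ell_j^i,d_j^i\}$, leaving the third branch intact. Writing $L_i(T)\subseteq\{l_i^1,l_i^2,l_i^3\}$ for the set of literals whose literal edge lies in $T$, this enumeration yields $1\le|L_i(T)|\le 3$, and in particular $L_i(T)\neq\emptyset$. I would then argue that if the $3$-SAT instance is satisfiable, one can fix a satisfying assignment and, in each $G_i$, choose a spanning tree whose sole literal edge corresponds to a satisfied literal of $C_i$, producing a global tree $T$ with $\bigcup_i L_i(T)$ containing no contradictory pair, so that $\max_{S\in\Gamma}F(T,S)\le 1$. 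Conversely, for any spanning tree $T$, since every $L_i(T)$ is non-empty, the set $\bigcup_i L_i(T)$ hits every clause; if the instance is unsatisfiable this set cannot be extended to a truth assignment, hence contains some contradictory pair, and the corresponding scenario witnesses $F(T,S)=2$.

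These two estimates yield a gap of $2$ between satisfiable and unsatisfiable instances, proving the claimed $(2-\epsilon)$-inapproximability for \textsc{Minmax Minimum Spanning Tree} unless $P=NP$. The regret version follows without change: under any scenario $S\in\Gamma$ the two cost-$1$ edges lie in at most two different gadgets, so each $G_i$ admits a spanning tree avoiding any one prescribed literal edge, which gives $F^{*}(S)=0$ and hence $Z(T)=\max_{S}F(T,S)$. The main obstacle I anticipate is the worry that the optimizer might exploit spanning trees with $|L_i(T)|\ge 2$ to lower the worst-case cost; however, enlarging $L_i(T)$ can only \emph{increase} the chance of a contradictory pair appearing in $\bigcup_i L_i(T)$, so the lower bound of $2$ in the unsatisfiable case is unaffected. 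The construction preserves edge series-parallelity and works for directed and undirected graphs alike, mirroring the last remark in the proof of Theorem~\ref{thm1}.
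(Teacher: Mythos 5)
Your proof is correct and follows essentially the same route as the paper: it reuses the base graph $G$ and scenario set $\Gamma$ from the proof of Theorem~\ref{thm1} and exploits the zero-cost dummy edges to transfer the gap of~2 from $s$--$t$ paths to spanning trees. Your gadget-by-gadget analysis --- showing that every spanning tree must keep at least one literal edge in each clause gadget and that keeping extra literal edges can only strengthen the lower bound in the unsatisfiable case --- makes explicit a step that the paper's one-line ``add/remove dummy edges'' argument leaves implicit.
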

\begin{proof}
It is enough to observe that an optimal minmax (regret) path in the first graph $G$ from the proof of Theorem~\ref{thm1} can be transformed to an optimal minmax (regret) spanning tree and vice versa by adding or removing a number of dummy edges. 
Since the dummy edges have costs equal to~0 under all scenarios, this transformation is cost preserving. We get a gap of~2 and the theorem follows.
\end{proof}
Notice that Corollary~\ref{cor1} strengthens the results obtained in~\cite{AI05c}. However, we are not able to show here that \textsc{Minmax (Regret) Minimum Spanning Tree} is not approximable within a constant factor. The reduction, which is valid  for the first graph $G$, is not true for the subsequent graphs $G^{(t)}$. In other words, it is not possible to transform a path in $G^{(t)}$ into a spanning tree by simply adding dummy edges. The question whether \textsc{Minmax (Regret) Minimum Spanning Tree} is approximable within a constant remains open. It is also open for a more general class of minmax (regret) matroidal problems, where $\Phi$ consists of all bases of a given matroid~\cite{law76}. 

Finally, let us consider the \textsc{Minimum s-t Cut} problem. In this problem we distinguish two nodes $s$ and $t$ in $G$ and $\Phi$ consists of all $s-t$-cuts in $G$, that is the subset of edges whose removal disconnects $s$ and $t$. 
\begin{thm}
The
 \textsc{Minmax (Regret) s-t Cut}  problem
  is not approximable within $\log^{1-\epsilon} K$ for any $\epsilon>0$, unless NP $\subseteq$ DTIME$(n^{\mathrm{poly} \log n})$ even for edge series-parallel graphs.
\end{thm}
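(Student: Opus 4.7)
The plan is to mirror the reduction of Theorem~\ref{thm1}, exchanging the roles of series and parallel composition so that $s$-$t$ cuts take over the role that $s$-$t$ paths played there. Given an instance of \textsc{3-SAT}, for each clause $C_i=(l_i^1\vee l_i^2\vee l_i^3)$ I construct a gadget $H_i$ consisting of three \emph{literal edges} arranged \emph{in series} between distinguished endpoints $s_i$ and $t_i$, each labeled by one literal of $C_i$. The clause gadgets are then combined \emph{in parallel} by identifying all $s_i$ with a common source $s$ and all $t_i$ with a common sink $t$, yielding an edge series-parallel graph $H$. Since the clause gadgets sit in parallel, every $s$-$t$ cut of $H$ must intersect each $H_i$, and since the three literal edges of $H_i$ are in series, at least one literal edge per clause belongs to every cut. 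Thus $s$-$t$ cuts of $H$ correspond precisely to choices of at least one literal per clause, which is the dual of the ``path picks one literal per clause'' structure used for the shortest-path reduction.

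Scenarios are then defined identically to Theorem~\ref{thm1}: for each pair of contradictory literals $l_i^j=\sim l_q^r$, one scenario assigns cost~$1$ to the corresponding two literal edges and cost~$0$ to all others. If \textsc{3-SAT} is satisfiable by an assignment $\alpha$, cutting one $\alpha$-satisfied literal per clause yields an $s$-$t$ cut containing no contradictory pair, hence minmax cost at most~$1$. If \textsc{3-SAT} is unsatisfiable, any $s$-$t$ cut contains at least one literal from every clause and thus some contradictory pair, giving minmax cost at least~$2$. This establishes the initial gap of~$2$.

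I would then replay, verbatim, the iterative amplification of Theorem~\ref{thm1}: replace every literal edge in $H$ by a fresh copy of $H$, extend each previous-level scenario by the Cartesian-product recipe $\Gamma^{(k-1)}\times\Gamma^{(k-1)}$ on its two unit entries (zeros becoming zero matrices), and iterate $t=\log\log^{\beta}n$ times. The inductive structural claim at each level is that every $s$-$t$ cut of the substituted graph $H^{(k)}$ can be assumed, without increase in cost, to consist of $s_e$-$t_e$ sub-cuts of the copies $H_e$ indexed by $e$ in some macro-cut of $H$. Under a Cartesian-product scenario the cost then splits additively between the two copies corresponding to an original contradictory pair, so maximizing the two factors independently doubles the gap from $2^{k}$ to $2^{k+1}$. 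After $t$ iterations the instance has $|E|^{t+1}$ edges, $K'=K^{2^{t+1}-1}$ scenarios and gap $2^{t+1}=2\log^{\beta}n$, and the same algebra as in Theorem~\ref{thm1} turns this into inapproximability within $\log^{1-\epsilon}K'$ unless NP~$\subseteq$~DTIME$(n^{\mathrm{poly}\log n})$. Series-parallel structure is preserved at every iteration, since substituting a series-parallel graph for a single edge yields a series-parallel graph.

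The main obstacle is the cut-decomposition lemma at each level: that every $s$-$t$ cut of $H^{(k)}$ is dominated in minmax cost by a cut supported on $s_e$-$t_e$ sub-cuts of the copies $H_e$ for $e$ in some macro-cut of $H$, and empty on all remaining copies. This is the natural dual of the path-subdivision observation underlying Theorem~\ref{thm1} and follows from the recursive definition of series-parallel composition together with nonnegativity of costs. For the minmax-regret version, one verifies inductively that every scenario admits a zero-cost $s$-$t$ cut: at level $0$, each clause has at most one cost-$1$ literal, so a cost-$0$ literal can always be chosen per clause, and the recursion preserves this property. Hence $F^{*}(S)=0$ in every scenario, minmax regret coincides with minmax, and the same inapproximability bound applies.
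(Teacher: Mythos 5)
Your proposal is correct and follows essentially the same route as the paper: the identical clause gadget (three literal edges in series per clause, all clause paths in parallel between $s$ and $t$), the same contradictory-pair scenarios giving the initial gap of~2, and the same iterative edge-substitution amplification carried over from Theorem~\ref{thm1}. You are in fact more explicit than the paper about the one point it leaves implicit --- the cut-decomposition lemma guaranteeing that every $s$-$t$ cut of the substituted graph contains a full sub-cut of at least one copy per parallel chain --- and your justification of it, and of $F^*(S)=0$ for the regret version, is sound.
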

\begin{proof}
As in the proof of Theorem~\ref{thm1}, 
we  use a reduction  from \textsc{3-SAT}. 
For a given instance of \textsc{3-SAT}, we construct the corresponding instance of \textsc{Minmax s-t Cut} as follows: 
for each clause $C_i=(l_i^1\vee l_i^2 \vee l_i^3)$ in~$\mathscr{C}$,
we create three edges of the form $\{s,v_1^i\}$, $\{v_1^i,v_2^i\}$ and $\{v_2^i,t\}$ that correspond to the literals in $C_i$.
Observe that the resulting graph $G$ is composed of exactly $m$ disjoint 
$s-t$ paths and it has a series-parallel topology.
For every pair of edges, that correspond to \emph{contradictory} literals  $l_i^j$ and $l_q^r$, we form scenario $S$ such that under this scenario the costs of edges that correspond to $l_i^j$ and $l_q^r$ are set to~1 and the costs of all the remaining edges are set to 0.	 An example of the reduction is shown in Figure~\ref{fig4}.
	\begin{figure}[ht]
\begin{center}
			\psfrag{C}{\footnotesize 
			$\mathscr{C}=\{(x_1\vee \sim x_2\vee \sim x_3),(\sim x_1\vee  x_2\vee x_3),(x_1\vee  x_2\vee  x_3)\}$}
			\psfrag{v11}{\tiny $v^{1}_1$}
			\psfrag{v12}{\tiny $v^{1}_2$}

			\psfrag{v21}{\tiny $v^{2}_1$}
			\psfrag{v22}{\tiny $v^{2}_2$}

			\psfrag{v31}{\tiny $v^{3}_1$}
			\psfrag{v32}{\tiny $v^{3}_2$}

			\psfrag{s}{\tiny $s$}
			\psfrag{t}{\tiny $t$}
			\psfrag{x1}{\tiny $x_1$}
			\psfrag{nx2}{\tiny $\sim x_2$}
			\psfrag{nx3}{\tiny $\sim x_3$}	

			\psfrag{nx1}{\tiny $\sim x_1$}
			\psfrag{x2}{\tiny $x_2$}
			\psfrag{x3}{\tiny $x_3$}	
			
			\psfrag{X}{			
    \begin{footnotesize}
    \renewcommand{\arraystretch}{0.9}
    \setlength\tabcolsep{3pt}			
    \begin{tabular}{c|cccccc|}
                            &$S_1$&$S_2$&$S_3$&$S_4$&$S_5$&$S_6$\\
      \hline
    $\{s,v^{1}_{1}\}$&\textbf{1}&0&0&0&0&0\\ 
    $\{v^{1}_{1},v^{1}_{2}\}$&0&\textbf{1}&\textbf{1}&0&0&0\\ 
    $\{v^{1}_{2},t\}$&0&0&0&\textbf{1}&\textbf{1}&0\\
    \hline
    $\{s,v^{2}_{1}\}$&\textbf{1}&0&0&0&0&\textbf{1}\\
    $\{v^2_1,v^{2}_{2}\}$&0&\textbf{1}&0&0&0&0\\
    $\{v^{2}_{2},t\}$&0&0&0&\textbf{1}&0&0\\
    \hline
    $\{s,v^{3}_{1}\}$&0&0&0&0&0&\textbf{1}\\
    $\{v^{3}_{1},v^3_2\}$&0&0&\textbf{1}&0&0&0\\
    $\{v^{3}_{2},t\}$&0&0&0&0&\textbf{1}&0\\
    \end{tabular}
    \end{footnotesize}
}	
					
      \includegraphics{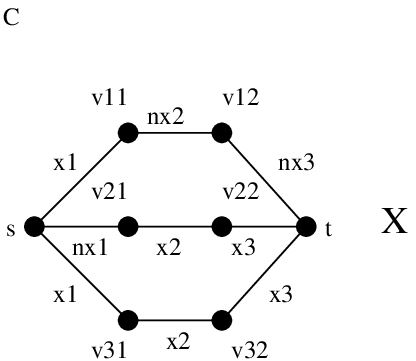}
\end{center}
\caption{\small An example of the reduction.} \label{fig4}
\end{figure}

It is easy to
check that the answer to \textsc{3-SAT} is `Yes' if there is a cut $C$ in $G$ such that $\max_{S\in \Gamma} F(C,S)\leq 1$ and $\max_{S\in \Gamma} F(C,S)\geq 2$ otherwise. We thus get a gap of~2 and the \textsc{Minmax s-t Cut} problem is not approximable within~$(2-\epsilon)$ for any $\epsilon>0$ unless P=NP. The rest of the proof is the same as the one of Theorem~\ref{thm1}.
\end{proof}

\section{Conclusions}

In this paper, we have given a
negative answer to the question about
the existence of  approximation algorithms
with a constant performance ratio 
for  the minmax and minmax regret versions 
of  \textsc{Shortest Path}, 
\textsc{Minimum Assignment} and \textsc{Minimum s-t Cut}
under discrete scenario representation of uncertainty.
Namely, we have shown that the considered problems
are hard to approximate   within a ratio of
$\log^{1-\epsilon} K$ for any $\epsilon>0$
if the number of scenarios $K$ is unbounded.
The question whether the performance ratio of $K$ is
the best possible for these problems remains open
and  it is the subject of future research. We have also strengthen the known results for the \textsc{Minmax (Regret) Minimum Spanning Tree} problem. For this problem, however, it may be still possible to design an approximation algorithm with a constant performance ratio.


\subsubsection*{Acknowledgements}
This work was 
partially supported by Polish Committee for Scientific Research, grant
N N111 1464 33.

\end{document}